\newtheorem{lemma}{Lemma}
\newtheorem{theorem}{Theorem}
\newtheorem{example}{Example}
\renewcommand{\k}{\mathbf{k}}
\newcommand{\T}{^\mathrm{T}}
\newcommand{\delt}{\tilde{\boldsymbol{\Delta}}}
\newcommand{\de}{\boldsymbol{\Delta}}
\newcommand{\cb}{\mathbf{c}}
\newcommand{\kt}{\tilde{\mathbf{k}}}
\newcommand{\kx}{k_\mathrm{x}}
\newcommand{\ky}{k_\mathrm{y}}
\begin{document}

\title{Spatially Selective Reconfigurable Intelligent Surfaces Through Element Permutation}

\author{Fredrik Rusek, Jose Flordelis, Erik Bengtsson, Kun Zhao, Olof Zander\\
       Sony Europe, Lund, Sweden. Contact author: fredrik.x.rusek@sony.com

 \thanks{This research is supported  by the HORIZON-JU-SNS-2022-STREAM-B-01-03 6G-SHINE project (Grant Agreement No. 101095738).}
 }
\maketitle
\thispagestyle{empty}
\begin{abstract}
A standard reconfigurable intelligent surface (RIS) can be configured to reflect signals from an arbitrary impinging direction to an arbitrary outgoing direction. However, if a signal impinges from any other direction, said signal is reflected, with full beamforming gain, to a specific direction, which is easily determined. The goal of this paper is to propose a RIS which \emph{only} reflects signals from the configured impinging direction. This can be accomplished by a RIS architecture that permutes the antenna elements in the sense that a signal is re-radiated from a different antenna than the one receiving the signal. We analytically prove this fact, and also discuss several variants and hardware implementations.
\end{abstract}

\begin{IEEEkeywords}
Reconfigurable intelligent surface (RIS), spatial selectivity.
\end{IEEEkeywords}


\IEEEpeerreviewmaketitle
\section{Introduction}
\IEEEPARstart{R}{econfigurable} intelligent surfaces (RISs) is today a well-established technology, and most theoretical aspects have received major research attention; see, for example,~\cite{RISbook, SmartRadio} and references therein. Recently, the volume of available literature on how RIS technology may be integrated in contemporary and emerging wireless systems has been growing; see, for example,~\cite {industryview}. Further, the European Telecommunications Standards Institute (ETSI) has a currently ongoing industry specification group (ISG) on RIS, which focuses on pre-standardization work, and several group reports are publicly available~\cite{ETSI_RIS}. A technology related to RIS is, arguably, that of Network Controlled Repeaters (NCRs), which will be a part of 3GPP's Release-18~\cite{NCR_TR, NCR_WID}.

The typical behavior of a RIS is that an impinging signal from a given direction is reflected to an arbitrary, but configurable, outgoing direction; this is commonly termed as an anomalous reflection. In other words, the RIS selects a spatial pair of directions and establishes a strong link between said two directions. This brings us closer to the topic under study in this paper, namely, whether it is true to say that a RIS itself is \emph{spatially selective}? With a spatially selective RIS, we mean one where the \emph{only} reflected signal is the one impinging from the configured input direction. This is, to the best of the authors' knowledge, not the case for any typical RIS technology. Rather, a RIS reflects signals impinging from almost \emph{any} direction (exactly which will be clarified by Lemma 1 and surrounding text), and this fact is of some concern to us as it may cause troublesome interference situations among cells. 

The situation is discussed in \cite[Section 6]{ETSI_GR3}, where the consequences of a RIS reflecting signals from other cells are elaborated upon. For example, significant SINR degradation may occur~\cite[Figure 15]{ETSI_GR3} in some scenarios. Another recent discussion is~\cite{bjornson}, where it is observed that RISs reflecting signals impinging from non-configured directions may cause considerable pilot contamination problems. A solution offered in~\cite{bjornson} is to facilitate coordination among cells during pilot transmissions. However, this solution does not scale well as the number of cells increases.

Altogether, RIS is a spatially non-selective technology, and it is an open question how it can be made spatially selective. In this paper, we shall demonstrate that by re-radiating signals from antenna elements other than the ones they arrive at, a RIS naturally becomes spatially selective. We note that a similar proposal was made in~\cite{clerckx}. However, the focus in~\cite{clerckx} is on maximizing the channel gain, and spatial selectivity was never studied. Our paper is organized as follows: 
Section~\ref{rismodel} lays down a system model for a standard RIS, which Section~\ref{risnonsel} shows spatially non-selective. Section~\ref{RISsel}, our main contribution, proposes a RIS architecture that is spatially selective, and proves this fact analytically. Section~\ref{variants} discusses some variants and properties of the proposed architecture, and Section~\ref{numerical} provides numerical results. Section~\ref{summ} summarizes the paper.

\section{RIS system model} \label{rismodel}
We study a RIS with $M^2$ elements arranged according to a square grid with an inter-element spacing of~$\lambda/2$. Each element phase rotates its impinging signal, and then re-radiates the phase-rotated signal. In this paper, we are exclusively concerned with far-field signals, and we shall use directional cosines, $\mathbf{k}=(k_\mathrm{x},k_\mathrm{y})$ to parameterize spatial directions as seen from the RIS. For a directional cosine to be valid, it must hold that $k_\mathrm{x}^2+k_\mathrm{y}^2\leq 1$, a.k.a. the visible region. In this paper, all directions are defined as outgoing from the RIS. With that, if a signal impinges the RIS from a direction $\k$, then the received signal power in a direction $\kt$, where path-losses, RIS insertion losses, effective aperture losses, etc., have been ignored, reads
\begin{equation} \label{eq1}
A(\k,\kt;\mathbf{C})=|\mathbf{s}(\kt)\circ\mathbf{C}\circ \mathbf{s}(\k)|^2,
\end{equation}
where "$\circ$" denotes the Hadamard product, $\mathbf{C}$ is an $M\times M$ matrix whose $(m,n)$th element represents the phase shift at RIS element $(m,n)$, and $\mathbf{s}(\cdot)$ is an $M\times M$ matrix representing a steering vector associated with the direction specified by its argument. More specifically, element $(m,n)$ of $\mathbf{s}(\k)$ reads
\begin{equation} \label{eq2}
s_{m,n}(\k)=\mathrm{exp}\left(\jmath\phi\right) \mathrm{exp}\left(\jmath \pi (m\kx+n\ky)\right),
\end{equation}
where $\phi$ is a reference phase irrelevant for our purposes and will be left out throughout the paper.

It is much more common in contemporary literature to work with a vectorized version of~\eqref{eq1}. The reader may choose his/her favorite order of vectorization, but we shall express~\eqref{eq1} throughout the paper as
\begin{equation} \label{eq3}
A(\k,\kt;\mathbf{C})=|\mathbf{s}\T(\kt)\mathbf{C} \mathbf{s}(\k)|^2,
\end{equation}
where $\mathbf{C}$ is an $M^2 \!\times \!M^2$ diagonal matrix, and $\mathbf{s}(\cdot)$ are $M^2\!\times\! 1$ vectors.
If we let~$\mathbf{c}$ denote the diagonal elements of $\mathbf{C}$, then we may rewrite~\eqref{eq3} as
\begin{equation} \label{eq4}
A(\k,\kt;\mathbf{c})=|\mathbf{c}\T (\mathbf{s}(\k)\circ\mathbf{s}(\kt))|^2.
\end{equation}
Maximizing $A(\k,\kt;\mathbf{c})$ over $\mathbf{c}$ is trivial and the solution reads
\begin{equation}
   \label{optsol}
    \mathbf{c}^\mathrm{opt}(\k,\kt) = \mathbf{s}(-(\k+\kt)).
\end{equation}
This yields
$$A^\mathrm{opt}(\k,\kt)=A(\k,\kt;\mathbf{c}^\mathrm{opt}(\k,\kt))=M^4.$$
We remark that although $A^\mathrm{opt}(\k,\kt)$ is independent of its arguments, we keep these as they will be useful in later discussions. 
But we shall, for the sake of convenience, drop the superscript "$\mathrm{opt}$" from $\mathbf{c}^\mathrm{opt}(\k,\kt))$ and $A^\mathrm{opt}(\k,\kt)$, and use the general notation $\mathbf{c}(\mathbf{r}) = \mathbf{s}(-\mathbf{r})$: This is the optimal RIS configuration for directions $\k,\kt$ fulfilling  $\k+\kt+\mathbf{r}=\mathbf{0}$.

As a final remark, which will be important later, we note that a RIS satisfies
\begin{equation} \label{recip}
    A(\k,\kt;\cdot)=A(\kt,\k;\cdot)
\end{equation}
no matter what the RIS configuration is. We refer to this as \textit{a RIS is reciprocal}. This is of strong operational significance, as it implies that the same configuration can be used for both uplink and downlink.

\section{A RIS is not spatially selective} \label{risnonsel}
If the RIS is configured with phase configuration $\mathbf{c}(-\k-\kt)$, the impinging signal arrives from direction $\k$, but we measure the received power in a direction $\kt+\delt$, then it can be verifed that
\begin{equation} \label{loss1}
    A(\k,\kt+\delt;\mathbf{c}(-\k-\kt))\ll  A(\k,\kt).
\end{equation}
Likewise, if the measurement is maintained in direction $\kt$, but the signal impinges on the RIS from a direction $\k+\de$, we have
\begin{equation} \label{loss2}
    A(\k+\de,\kt;\mathbf{c}(-\k-\kt))\ll  A(\k,\kt).
\end{equation}
The left-hand-sides of \eqref{loss1}--\eqref{loss2} behave, for random realizations of $\de,\delt$ and as $M\to \infty$, as the powers of the sum of $M^2$ random and independent phasors. Wherefore, the expected values of the left-hand-sides are $M^2$.

Now, the above considerations are perhaps not very exciting, but a more interesting question is the following: For a given RIS configuration $\cb(-\k-\kt)$ and impinging direction $\k+\de$, can we find a measurement direction $\kt+\delt$ such that
\begin{equation} \label{paras}
    A(\k+\de,\kt+\delt;\cb(-\k-\kt))=A(\k,\kt) ?
\end{equation}
To answer this, we make use of Lemma 1.
\begin{lemma}
For any impinging direction $\mathbf{t} = (t_\mathrm{x}, t_\mathrm{y})$, $\|\mathbf{t}\|^2 \leq 1$, and RIS configuration setting $\cb(\mathbf{r})$ there exists a measurement direction $\mathbf{z} = (z_\mathrm{x}, z_\mathrm{y})$ such that $A(\mathbf{t},\mathbf{z};\cb(\mathbf{r}))=M^4$ if, and only if, the following system of equations are solvable
\begin{align*}
    \mathbf{t} + \mathbf{z} + \mathbf{r} = \mathbf{0} \mod{2}\quad\mathrm{s.t.}\ \|\mathbf{z}\|^2\leq 1.
\end{align*} 
\end{lemma}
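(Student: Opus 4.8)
The plan is to insert the explicit configuration $\cb(\mathbf{r})=\mathbf{s}(-\mathbf{r})$ into the amplitude \eqref{eq4} and to exploit the fact that every factor in the resulting sum is a pure phasor whose argument is \emph{linear} in the element indices $(m,n)$. Writing the $M^2$-fold sum $\cb\T(\mathbf{s}(\mathbf{t})\circ\mathbf{s}(\mathbf{z}))$ out with the entries from \eqref{eq2}, the $(m,n)$th summand collapses to $\exp(\jmath\pi(m\alpha+n\beta))$, where $\alpha$ and $\beta$ are the $x$- and $y$-components of $\mathbf{t}+\mathbf{z}+\mathbf{r}$. Because the phase separates additively into an $m$-part and an $n$-part, the double sum factors as a product of two one-dimensional geometric sums, $\big(\sum_m e^{\jmath\pi m\alpha}\big)\big(\sum_n e^{\jmath\pi n\beta}\big)$, so that $A(\mathbf{t},\mathbf{z};\cb(\mathbf{r}))=\big|\sum_m e^{\jmath\pi m\alpha}\big|^2\,\big|\sum_n e^{\jmath\pi n\beta}\big|^2$.

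Next I would bound each one-dimensional sum separately. By the triangle inequality, $\big|\sum_{m}e^{\jmath\pi m\alpha}\big|\le M$, with equality if and only if all $M$ phasors coincide, i.e. if and only if $e^{\jmath\pi\alpha}=1$, which holds precisely when $\alpha\equiv 0 \bmod 2$ (the modulus $2$ rather than $2\pi$ being an artifact of the $\pi$ in \eqref{eq2}, itself a consequence of the $\lambda/2$ spacing). Equivalently, the geometric sum equals the Dirichlet-kernel expression $\sin(\pi M\alpha/2)/\sin(\pi\alpha/2)$ for $\alpha\not\equiv 0$, whose modulus is strictly below $M$. The same holds verbatim for $\beta$. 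Multiplying the two bounds, $A(\mathbf{t},\mathbf{z};\cb(\mathbf{r}))=M^4$ holds if and only if \emph{both} $\alpha\equiv 0$ and $\beta\equiv 0 \bmod 2$, that is, if and only if $\mathbf{t}+\mathbf{z}+\mathbf{r}=\mathbf{0}\bmod 2$. This already delivers both implications; the side constraint $\|\mathbf{z}\|^2\le 1$ enters only to demand that the $\mathbf{z}$ solving the congruence be physically admissible, i.e. lie in the visible region, so that ``existence of a measurement direction'' and ``solvability of the constrained system'' become synonymous.

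The step I expect to require the most care is the equality analysis of the one-dimensional sums: one must argue that the value $M$ is attained \emph{only} on the even-integer lattice and nowhere else, which is exactly what makes the ``only if'' direction, and hence the eventual spatial selectivity, work. Phrasing this as the strictness of the triangle inequality for non-aligned phasors is cleaner and more transparent than manipulating the closed-form Dirichlet kernel, and it also makes clear that no approximation or large-$M$ argument is needed: the characterization is exact for every finite $M$. A minor bookkeeping point is to fix the index ranges used in \eqref{eq2} once and for all and to keep the sign convention of $\cb(\mathbf{r})$ consistent with \eqref{optsol}, so that the congruence indeed comes out with $+\mathbf{r}$ as stated; since $+\mathbf{r}$ and $-\mathbf{r}$ agree mod $2$ only for integer $\mathbf{r}$, this is precisely where the convention must be pinned down.
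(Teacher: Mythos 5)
Your proof is correct and follows essentially the same route as the paper's: insert the phase configuration, factor the double sum into two one-dimensional geometric sums in $m$ and $n$, and characterize when each attains modulus $M$, with the visible-region constraint appended at the end. You are in fact slightly more careful than the paper on the equality condition of the triangle inequality (the ``only if'' direction) and on the sign convention of $\cb(\mathbf{r})$ --- the paper's definition $\cb(\mathbf{r})=\mathbf{s}(-\mathbf{r})$ is indeed inconsistent with the $+\mathbf{r}$ appearing in the stated congruence and in its own proof, which is exactly the bookkeeping point you flag.
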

\begin{proof}
    We have
    \begin{eqnarray} \label{lemeq1}
        A(\mathbf{t},\mathbf{z};\cb(\mathbf{r}))&=& \left|\cb\T(\mathbf{r})\left(\mathbf{s}(\mathbf{t})\circ \mathbf{s}(\mathbf{z}) \right)\right|^2 \nonumber \\
        &=& \left|\cb\T(\mathbf{r})\mathbf{s}(\mathbf{t}+\mathbf{z})\right|^2. 
    \end{eqnarray}
    From (\ref{eq2}) we can express (\ref{lemeq1}) as
\begin{align*}    A(\mathbf{t},\mathbf{z};\cb(\mathbf{r}))&=\Bigl|\sum_{m=1}^M \mathrm{exp}\left(\jmath \pi m(z_\mathrm{x}+t_\mathrm{x}+r_\mathrm{x})\right)\Bigr. \\
& \quad \times \sum_{n=1}^M \mathrm{exp}\left(\jmath \pi n(z_\mathrm{y}+t_\mathrm{y}+r_\mathrm{y})\right)\Bigr|^2.
\end{align*}
To reach $A(\mathbf{t},\mathbf{z};\cb(\mathbf{r}))=M^4$ we must have that each of the constituent sums add up to $M$. But this can only happen if $z_\mathrm{x}+t_\mathrm{x}+r_\mathrm{x} \;\mathrm{mod}\; 2=0$ and $z_\mathrm{y}+t_\mathrm{y}+r_\mathrm{y} \;\mathrm{mod}\; 2=0$.
However, if no such solution exists such that $z_\mathrm{x}^2+z_\mathrm{y}^2\leq 1$ exists, then the direction $\mathbf{z}$ does not belong to the visible region.
\end{proof}

To answer our above-posed question, we may simply identify $\mathbf{r}=-\k-\kt$, $\mathbf{t}=\k+\de$, and $\mathbf{z}=\kt+\delt$. Thus, we may find $\delt$ through 
$$-\k-\kt+\k+\de+\kt+\delt=\de+\delt=\mathbf{0},$$
which gives $\delt=-\de$. If this $\delt$ also satisfies $\|\kt+\delt\|^2\leq 1$, then the direction $\kt+\delt$ belongs to the visible region, and we have found a measurement direction in which the full beamforming gain $M^4$ is obtained. If, on the other hand,  $\|\kt+\delt\|^2> 1$, we may still find a solution by considering equations of the form
$\de+\delt=(p_1,p_2)$ with $p_k\in\{0,\pm 2\}$. If any of these equations produces a $\kt+\delt$ in the visible region, we have found a measurement direction with full beamforming gain. 
It is not hard to find a case where no such measurement direction exists. For example, selecting $\k+\de+\mathbf{r}=(0.8,0.8)$ where $\cb(\mathbf{r})$ is the RIS configuration, has no such  direction. 

However, for a given RIS configuration $\cb(\mathbf{r})$ and a randomly selected impinging direction, the probability that a measurement direction for which the full gain $M^4$ is achieved exists is high. This may, potentially, lead to cumbersome interference and RIS-coexistence situations. We give an illuminating example, based on \cite[Section 6]{ETSI_GR3}, next.
\begin{example}
Assume two neighbor systems that share frequency bands. System A is RIS-empowered, and its RIS has been configured. 
However, in the downlink of system B, the base station's (BS) signal reaches the RIS of system A, and the particular RIS configuration is such that system B's user is receiving the RIS-reflected signal. The user in system B has estimated its downlink channel based on pilot signals, and the channel estimate therefore includes the RIS-induced link. Assume now that the RIS in system A is reconfigured. This is an abrupt event, and the channel to the user of system B is instantaneously impacted. Such channel change is far too rapid to be tracked by pilot signals, and this leads to severely degraded performance.
\end{example}

We refer to the situation discussed in this section as \textit{a RIS is not spatially selective}, as a RIS may potentially reflect any incoming signal, with full beamforming gain, somewhere in space. This is not especially surprising, as a good analogy for a RIS is a standard mirror. A RIS reconfiguration may be considered as a tilt of the mirror. For a mirror, it is absolutely clear that light from any direction will be reflected---and this carries over to RISs, save for the fact that the equations in Lemma 1 must be solvable.

\section{A spatially selective RIS implementation} \label{RISsel}
In this section, we put forth a RIS implementation that only reflects signals from one given direction. By contemplating upon how this may possibly be done, one quickly reaches the conclusion that it cannot be done through means of ordinary signal processing, i.e., by adapting the phase configuration $\cb$. Rather, the RIS hardware must be re-engineered in such a way that the RIS is spatially selective. 

In standard RIS technology, the signal received at a certain element is first phase rotated and then re-radiated from the very same element. If this is altered, i.e., the phase-rotated signal leaves the RIS from an element other than the one receiving it, the RIS naturally becomes spatially selective. 

Let~$\sigma$ denote a permutation of the integers $1,2,\ldots,M^2$, and let $\mathbf{P}$ denote a permutation matrix constructed from~$\sigma$. That is, the $(i, j)$ entry of $\mathbf{P}$ is 1 if $\sigma(j) = i$, and 0 otherwise. If we assume that the phase rotation takes place at the receiving antenna, then \eqref{eq3} is replaced with 
\begin{equation} \label{rew_eq3}
A(\k,\kt;\mathbf{C})=|\mathbf{s}\T(\kt)\mathbf{P}\mathbf{C} \mathbf{s}(\k)|^2.
\end{equation}
Despite the permutation, it is straightforward to solve~\eqref{rew_eq3} for the optimal phase configuration $\mathbf{C}$. If its main diagonal, $\cb$, is configured to $\cb=\cb(\sigma,\kt,\k)$ where
$$\cb(\sigma,\kt,\k)=(\sigma^{-1}(\mathbf{s}(\kt))\circ \mathbf{s}(\k))^\ast$$
it is easily verified that 
$$A(\k,\kt;\cb(\sigma,\kt,\k))=M^4,$$
that is, the full beamforming gain is reached.

With that, we next turn our attention to impinging signals from other directions than $\k$ and consider whether we may find a direction $\kt+\delt$ for which 
$A(\k+\de,\kt+\delt;\cb(\sigma,\kt,\k))=M^4$. This is answered in Theorem 1.
\begin{theorem} \label{thm1}
    For separable permutations, i.e., element $(m,n)$ is permuted to element $(\sigma(m),\sigma(n))$, there exists $\sigma$ such that the only solution to $A(\k+\de,\kt+\delt;\cb(\sigma,\kt,\k))=M^4$ is $\de=\delt=\mathbf{0}$ if, and only if, $M\geq 4$.
\end{theorem}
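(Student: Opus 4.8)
The plan is to first reduce the full-gain condition to a purely arithmetic statement about the permutation, then recast that statement as a lattice-generation question, and finally resolve that question combinatorially.

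First I would substitute the configured main diagonal $\cb(\sigma,\kt,\k)$ into~\eqref{rew_eq3}, evaluated at impinging direction $\k+\de$ and measurement direction $\kt+\delt$, writing $\de=(\Delta_\mathrm{x},\Delta_\mathrm{y})$ and $\delt=(\tilde\Delta_\mathrm{x},\tilde\Delta_\mathrm{y})$. As in the proof of Lemma~1, each element phase splits into a part depending on $\de$ and a part depending on $\delt$; since the permutation is separable, $\sigma(m,n)=(\sigma(m),\sigma(n))$, the resulting $M^2$-term sum factorizes into an $\mathrm{x}$-sum $\sum_{m=1}^M\exp(\jmath\pi(m\Delta_\mathrm{x}+\sigma(m)\tilde\Delta_\mathrm{x}))$ and an identical $\mathrm{y}$-sum. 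Exactly as in Lemma~1, each factor attains modulus $M$ (so that $A=M^4$) if and only if its summand phase is constant in the summation index modulo $2$, i.e. $m\Delta_\mathrm{x}+\sigma(m)\tilde\Delta_\mathrm{x}$ is independent of $m$ mod $2$, and likewise in $\mathrm{y}$. Crucially, these conditions involve only $\de,\delt$ and $\sigma$; the configured pair $\k,\kt$ enters solely through the visible-region constraints $\|\k+\de\|^2\le1$ and $\|\kt+\delt\|^2\le1$.

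Next I would linearize. Subtracting the $m=1$ term, the $\mathrm{x}$-condition reads $(i-j)\Delta_\mathrm{x}+(\sigma(i)-\sigma(j))\tilde\Delta_\mathrm{x}\equiv0\pmod2$ for all $i,j$; that is, $(\Delta_\mathrm{x},\tilde\Delta_\mathrm{x})$ is $2$-orthogonal to every difference vector $(i-j,\sigma(i)-\sigma(j))$. Letting $\Lambda$ denote the integer lattice generated by these difference vectors, the admissible $(\Delta_\mathrm{x},\tilde\Delta_\mathrm{x})$ are exactly the points of $2\Lambda^\ast$, where $\Lambda^\ast$ is the dual lattice, and the same description holds for the $\mathrm{y}$-pair with the same $\Lambda$. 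I would then argue the key dichotomy: if $\Lambda=\mathbf{Z}^2$ then $2\Lambda^\ast=2\mathbf{Z}^2$, so every admissible $(\de,\delt)\neq(\mathbf0,\mathbf0)$ has some coordinate equal to a nonzero even integer, which for interior $\k,\kt$ pushes the corresponding direction out of the visible region, leaving only $\de=\delt=\mathbf0$; conversely, if $\Lambda$ is a proper sublattice then $2\Lambda^\ast\supsetneq2\mathbf{Z}^2$ contains a nonzero point strictly inside $(-2,2)^2$, which (taking the $\mathrm{y}$-perturbation zero and $\k,\kt$ near broadside) yields a spurious full-gain direction. Thus the theorem reduces to: for which $M$ does a permutation with $\Lambda=\mathbf{Z}^2$ exist?

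Finally I would settle this lattice-generation question. Telescoping shows $\Lambda$ is spanned by the consecutive-increment vectors $(1,d_m)$ with $d_m=\sigma(m+1)-\sigma(m)$, whence $\Lambda=\mathbf{Z}^2$ iff $\gcd\{d_i-d_j\}=1$, i.e. iff two increments of $\sigma$ differ by a unit. For $M=2$ there is a single increment, $\Lambda$ is rank-one, and a whole continuum of in-region spurious perturbations survives. For $M=3$ the second difference is $d_2-d_1=\sigma(3)-2\sigma(2)+\sigma(1)=3(2-\sigma(2))$, a multiple of $3$ for every permutation in $S_3$, so $\Lambda$ has index $0$ or $3$ and never equals $\mathbf{Z}^2$ (e.g. $\sigma=(1,3,2)$ leaves the in-region spurious perturbation $(\Delta_\mathrm{x},\tilde\Delta_\mathrm{x})=(2/3,2/3)$). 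For $M\ge4$ the explicit permutation $\sigma=(1,2,4,3,5,6,\dots,M)$ has increments beginning $1,2$, so $(1,1),(1,2)\in\Lambda$, their difference $(0,1)$ together with $(1,1)$ generate $\mathbf{Z}^2$, and the claim follows. The main obstacle is the clean equivalence in the second step: the forward direction must exclude \emph{all} nonzero points of $2\mathbf{Z}^2$ via the visible region (immediate for interior $\k,\kt$, but requiring a remark on boundary directions and the $\pm2$ aliases), while the converse must certify that the small dual vector produced by a proper sublattice genuinely lands in the visible region for an admissible configured pair; the combinatorics is then routine, with $M=3$ the only case needing the divisibility computation above.
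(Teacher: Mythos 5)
Your proof is correct, and it reaches the paper's conclusion by a genuinely different route. The paper, after the same reduction to the separable condition $\mu_\mathrm{x}=\mu_\mathrm{y}=M^2$, interprets the $\mathrm{x}$-sum as a DTFT of the sequence $u_m=\exp(\jmath\pi d_m\tilde{\Delta}_\mathrm{x})$ and extracts a single necessary condition from three consecutive indices, namely $\tilde{\Delta}_\mathrm{x}=2(q_2-q_1)/(2d_2-d_1-d_3)$ with the second difference of $\sigma$ in the denominator; it then checks that for $M=3$ this denominator is forced into $\{0,\pm 3\}$ (yielding an in-region spurious direction for every $\sigma\in S_3$) and exhibits the permutation $(4,3,1,2,5,\ldots,M)$ with denominator $1$ for $M\geq 4$. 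Your lattice-duality reformulation — admissible $(\Delta_\mathrm{x},\tilde{\Delta}_\mathrm{x})$ are exactly $2\Lambda^\ast$ for the lattice $\Lambda$ spanned by the increment vectors $(1,\sigma(m+1)-\sigma(m))$ — captures the \emph{full} set of constraints rather than only those from three consecutive terms, and turns the theorem into the clean dichotomy $\Lambda=\mathbf{Z}^2$ versus $\Lambda\subsetneq\mathbf{Z}^2$; your computation $d_2-d_1=3(2-\sigma(2))$ for $M=3$ is the mirror image of the paper's $2d_2-d_1-d_3\in\{0,\pm3\}$, and your $M\geq 4$ witness plays the same role as the paper's. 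What your approach buys is an exact necessary-and-sufficient characterization and an explicit index-of-sublattice explanation of \emph{why} $M=3$ fails; what it costs is the need to justify that a proper sublattice always yields a dual point strictly inside the visible window, which you correctly flag. Two small cautions: the paraphrase ``$\gcd\{d_i-d_j\}=1$, i.e.\ two increments differ by a unit'' is not an equivalence (increments $1,4,6$ have coprime differences without any unit gap) — harmless here since you only use the unit-gap condition as a sufficient certificate — and the boundary/$\pm 2$-alias caveat you mention is shared by the paper, which implicitly works with directional cosines modulo $2$.
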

The proof of Theorem 1 is provided in Appendix A. The theorem merely states that a permutation exists so that the RIS is spatially selective. However, practically finding a permutation fulfilling the condition in the theorem is trivial---any randomly chosen $\sigma$ will essentially do - separable or not. Permutations with strong structure, such as if element $(m,n)$ is mapped to $(M+1-m,M+1-n)$, are exceptions. A more interesting question is to what extent the RIS can be made spatially selective through permutations, i.e., what is the maximum of $A(\k+\de,\kt+\delt;\cb(\sigma,\kt,\k))$
for $\de,\delt$ "not close to zero". We remark that for non-separable permutations, the condition $M\geq 4$ needs not to hold, and, moreover, separable permutations of the form $(m,n)\to (\sigma_1(m),\sigma_2(n))$ still requires $M\geq 4$ (which an astute reader can observe from the proof, but not pointed out).
This discussion is deferred to Section \ref{numerical}. But before we tend to that, we  investigate a number of other practical aspects.

\section{Aspects of spatially selective RISs} \label{variants}

\subsection{Reciprocity}
Theorem~\ref{thm1} states that, for permutations of practical interest, the only solution to $A(\k+\de,\kt+\delt;\cb(\sigma,\kt,\k))=M^4$ is $\de=\delt=\mathbf{0}$; this was also pointed out in~\cite{clerckx}. Specifically, this implies that 
$A(\kt,\k;\cb(\sigma,\kt,\k))<M^4,$
that is, the RIS is not reciprocal (unless, of course, $\k=\kt$, i.e., retro-reflections). This implies that the RIS needs to be re-configured for uplink transmissions in case the configuration $\cb(\sigma,\kt,\k)$ is made for the downlink (and vice versa). Yet another way to see this is to outright observe that, in general,
$\mathbf{x}\T\mathbf{P\mathbf{y}}\neq\mathbf{y}\T\mathbf{P\mathbf{x}}$, $\mathbf{P}$ being a permutation matrix. Taking $\mathbf{y}=\mathbf{C}\mathbf{s}(\k)$ and $\mathbf{x}=\mathbf{s}(\kt)$ shows that the RIS is no longer reciprocal. 

To solve this issue, we may rely on beam splitting. Specifically, we configure the RIS with the phase setting
\begin{equation} \label{beamsplit}
\cb(\sigma,\k\leftrightarrow\kt)=\frac{\cb(\sigma,\kt,\k)+\cb(\sigma,\k,\kt)}{|\cb(\sigma,\kt,\k)+\cb(\sigma,\k,\kt)|},
\end{equation}
where both the division and magnitude operator should be understood elementwise.
The performance of such a setting is clarified in
\begin{lemma}
    As $M\to\infty$, and except for a set of directions $\{\k,\kt\}$ with measure 0, 
    \begin{align*}
    \lim_{M\to\infty}\!\!\!\frac{A(\k,\kt;\cb(\sigma,\k \!\leftrightarrow\!\kt))}{M^4}=\!\!\!\lim_{M\to\infty}\!\!\!\frac{A(\kt,\k;\cb(\sigma,\k \!\leftrightarrow\!\kt))}{M^4}\!=\!\frac{4}{\pi^2}. \nonumber 
    \end{align*}    
\end{lemma}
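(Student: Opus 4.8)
The plan is to collapse the beam-split gain into a single scalar sum over the $M^2$ elements, identify its per-element average, and then let $M\to\infty$ by an equidistribution argument. Write $\mathbf{a}=\cb(\sigma,\kt,\k)$ and $\mathbf{b}=\cb(\sigma,\k,\kt)$; both have unit-modulus entries $a_i,b_i$, so the entries of $\cb(\sigma,\k\!\leftrightarrow\!\kt)$ are $(a_i+b_i)/|a_i+b_i|$. Substituting this into the permuted form $A(\k,\kt;\cdot)=|\sum_i s_{\sigma(i)}(\kt)\,c_i\,s_i(\k)|^2$ and using the matched identity $s_{\sigma(i)}(\kt)s_i(\k)=a_i^\ast$, each summand becomes $a_i^\ast(a_i+b_i)/|a_i+b_i|=(1+a_i^\ast b_i)/|1+a_i^\ast b_i|$ because $|a_i|=1$. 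Writing $a_i^\ast b_i=e^{\jmath\psi_i}$ and using $1+e^{\jmath\psi}=2\cos(\psi/2)e^{\jmath\psi/2}$ collapses this to $\mathrm{sgn}(\cos(\psi_i/2))\,e^{\jmath\psi_i/2}$, so that
$$A(\k,\kt;\cb(\sigma,\k\!\leftrightarrow\!\kt))=|S|^2,\qquad S=\sum_{i=1}^{M^2}\mathrm{sgn}\!\left(\cos\tfrac{\psi_i}{2}\right)e^{\jmath\psi_i/2}.$$

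From~\eqref{eq2}, the phase evaluates to $\psi_i=\pi\big[(m_{\sigma(i)}-m_i)(\ktx-\kx)+(n_{\sigma(i)}-n_i)(\kty-\ky)\big]$, where $(m_i,n_i)$ is the grid position of element $i$. Next I would record the key deterministic fact about the per-element function $g(\phi)=\mathrm{sgn}(\cos\phi)\,e^{\jmath\phi}$: its average over one period is $\frac{1}{2\pi}\int_0^{2\pi}g(\phi)\,d\phi=\frac{2}{\pi}$, since $\mathrm{Re}\,g(\phi)=|\cos\phi|$ integrates to $\frac{2}{\pi}$ while $\mathrm{Im}\,g(\phi)=\mathrm{sgn}(\cos\phi)\sin\phi$ is odd and integrates to $0$. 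Because $S=\sum_i g(\psi_i/2)$, if the phase set $\{\psi_i/2\}_{i}$ equidistributes modulo $2\pi$ as $M\to\infty$, then $S/M^2\to\frac{2}{\pi}$, and hence $A/M^4\to\frac{4}{\pi^2}$, which is the asserted value.

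The core of the proof is therefore the equidistribution of $\{\psi_i/2\}$, and this is the step I expect to be the main obstacle. I would verify Weyl's criterion, i.e.\ that $\frac{1}{M^2}\sum_i e^{\jmath\ell\psi_i/2}\to 0$ for every nonzero integer $\ell$. For separable $\sigma$ the exponent is additive in the row and column displacements, so this character sum factors as $\big(\frac1M\sum_m e^{\jmath\ell\frac{\pi}{2}(\ktx-\kx)(\sigma_0(m)-m)}\big)\big(\frac1M\sum_n e^{\jmath\ell\frac{\pi}{2}(\kty-\ky)(\sigma_0(n)-n)}\big)$. Each factor, call it $T_M(w)$ with $w$ proportional to the corresponding direction offset, can be controlled through its second moment over the direction variable, $\int_0^{2\pi}|T_M(w)|^2\,dw\propto \frac{1}{M^2}\#\{(m,m'):\sigma_0(m)-m=\sigma_0(m')-m'\}$, which is $O(1/M)$ whenever the displacement multiset of $\sigma_0$ is sufficiently spread out (the generic permutations of Theorem~\ref{thm1}, excluding identity-like maps). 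Thus $T_M\to 0$ in $L^2$ and hence for almost every direction, and the exceptional directions---among them the retro-reflection diagonal $\k=\kt$, where every $\psi_i=0$ and $S=M^2$ yields the atypical value $A/M^4=1$---form a set of measure zero, as claimed. The delicate points are the dependence of $\sigma$ on $M$ and the upgrade from the $L^2$ bound to almost-everywhere convergence.

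Finally, reciprocity follows with no extra work: for the reversed pair the matched identity reads $s_{\sigma(i)}(\k)s_i(\kt)=b_i^\ast$, so each summand becomes $b_i^\ast(a_i+b_i)/|a_i+b_i|=\mathrm{sgn}(\cos(\psi_i/2))\,e^{-\jmath\psi_i/2}$, whence $A(\kt,\k;\cb(\sigma,\k\!\leftrightarrow\!\kt))=|\overline{S}|^2=|S|^2$. Both limits therefore coincide and equal $\frac{4}{\pi^2}$, completing the argument.
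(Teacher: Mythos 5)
Your proof is correct and follows essentially the same route as the paper's: both reduce the beam-split gain to a sum of $M^2$ unit-modulus phasors whose residual phases are asymptotically uniform over a half-period, giving a per-element average of $2/\pi$ and hence the limit $4/\pi^2$. The only difference is that where the paper simply asserts that the relevant phases become independent and uniformly distributed, you substantiate this with an explicit Weyl-criterion/second-moment argument for separable permutations (and your conjugation identity even yields exact finite-$M$ reciprocity rather than the paper's appeal to symmetry) --- a welcome tightening of the same argument, not a different approach.
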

\begin{proof}
It is easily seen that the phase of an individual element of $\cb(\sigma,\k \leftrightarrow\kt))$ is exactly the phase in between that of its constituent parts $\cb(\sigma,\kt,\k)$ and $\cb(\sigma,\k,\kt)$. The only ambiguity occurs if the said two constituent phases differ by precisely $\pi$; but this is an event with measure 0 and, thus, disregarded.

We may express, e.g., $A(\k,\kt;\cb(\sigma,\k \leftrightarrow\kt))$ as
\begin{equation} \label{lemma2:eq1}
A(\k,\kt;\cb(\sigma,\k \leftrightarrow\kt))=\Bigl |\sum_{m=1}^{M^2}c_mv_m(\sigma,\k,\kt)\Bigr|^2,
\end{equation}
where $c_m$ are the diagonal elements of $\mathbf{C}$ (fully specified by (\ref{beamsplit})), and $\mathbf{v}(\sigma,\k,\kt)=[v_1(\sigma,\k,\kt),\ldots,v_{M^2}(\sigma,\k,\kt)]$ may be explicitly found from $\sigma$, $\k$, and $\kt$.

But for arbitrarily selected directions, $\k$ and $\kt$ are irrational with probability one, and as $M\to\infty$, the phases $v_m(\sigma,\kt,\k)$ and $v_m(\sigma,\kt,\k)$ are statistically independent and uniformly distributed over $[0,2\pi)$. With that, the $c_m$ may be regarded as uniformly distributed over $[\angle v_m^\ast(\sigma,\k,\kt)-\pi/2,\angle v_m^\ast(\sigma,\k,\kt)+\pi/2)$. Consequently, we can express (\ref{lemma2:eq1}), in the limit as $M\to \infty$, as
\begin{align*}
\lim_{M\to\infty}\frac{1}{M^4}\sum_{m=1}^{M^2}\bigl|c_mv_m(\sigma,\k,\kt)\bigr|^2 = \Bigl|\frac{1}{\pi}\int_{-\pi/2}^{\pi/2}e^{\jmath x}\mathrm{d}x\Bigr|^2 =\frac{4}{\pi^2}.
\end{align*}
\begin{figure*}[t]
    \centering
    \includegraphics[width=.82\linewidth]{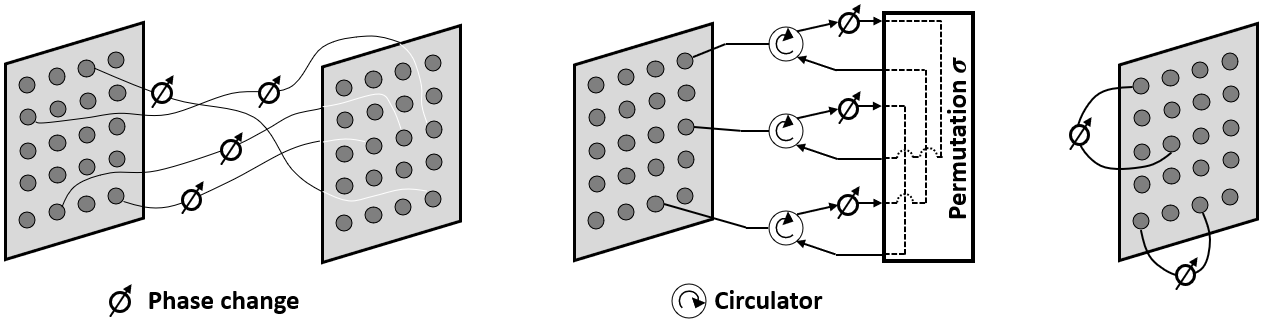}
    \caption{Left: A two-panel permuted RIS. Middle: A one-panel permuted RIS with circulators. Right: A one-panel permuted RIS with a symmetric permutation. Not all connections are shown for improved visibility.}
    \label{fig:ris_hws}
\end{figure*}
By symmetry, it follows that, in the limit as $M\to \infty$, also $A(\kt,\k;\cb(\sigma,\k \leftrightarrow\kt))/M^4\to 4/\pi^2$.
\end{proof}
The implication of Lemma 2 is that requiring reciprocity exacts a $\approx\! 4\,\mathrm{dB}$ power loss. Finally,   the power losses in the uplink and downlink are subject to a tradeoff by replacing~\eqref{beamsplit} with the weighted version 
\begin{equation} \label{beamsplit2}
\cb(\sigma,\kt\leftrightarrow\k)=\frac{\alpha\cb(\sigma,\kt,\k)+(1-\alpha)\cb(\sigma,\k,\kt)}{|\alpha\cb(\sigma,\kt,\k)+(1-\alpha)\cb(\sigma,\k,\kt)|}.
\end{equation}

\subsection{Hardware implementations}
We next turn to hardware architectures for RISs employing element permutations. We make a distinction between one-panel and two-panel RIS, as elaborated upon next.

\subsubsection{Two-panel RIS implementations}
In some scenarios, a RIS is favorably implemented using two antenna arrays. Examples may include indoor-to-outdoor, coverage extension, etc. In principle, one array is mounted to face the BS, while the second array faces the users. The two panels may be mounted back-to-back, but in indoor-to-outdoor scenarios, they may be separated by, e.g., a concrete wall. For a two-panel RIS, a permutation among elements is particularly easy to implement. 

Consider two $M\times M$ arrays. Without any permutation, a RIS would connect the $m$th element of both panels via a configurable phase shifter; here, the $m$th element is arbitrary as long as the order of elements is indexed in the same way at both panels. However, a spatially selective RIS based on a permutation connects element $m$ of, say, the first panel with element $\sigma(m)$ at the second panel, again via a phase shifter. This is illustrated in Figure \ref{fig:ris_hws} (left).

\subsubsection{One-panel RIS implementations} \label{sec:one_panel}
For the typical one-panel RIS, the permutation is more cumbersome to implement. The most straightforward way appears to be through circulators, as shown in Figure \ref{fig:ris_hws} (middle). The box labeled "permutation~$\sigma$" implements the permutation, and this may be fixed at design, or reconfigurable. Reconfigurable permutations are certainly of interest and may bring improved spatial selectivity, but are not studied any further in this paper.

\subsection{Symmetric permutations}
The use of circulators in Section~\ref{sec:one_panel} unarguably complicates the hardware, and it would be beneficial if circulators could be dispensed with. This is indeed possible if the permutation is symmetric, i.e., $\mathbf{P}=\mathbf{P}\T$. For symmetric permutations, an implementation is shown in Figure~\ref{fig:ris_hws} (right). 
As is apparent from the figure, only a single phase shifter is now present for each pair of elements. This will lead to a loss in beamforming gain, but avoids the need of any circulators. We quantify this loss in Theorem \ref{thm2} (whose proof appears in Appendix B).

\begin{theorem} \label{thm2}
For a random but symmetric permutation $\sigma$, and with randomly selected directions $\k, \kt$, except for those in a set of measure 0, we have, as $M\to \infty$, that 
$$\frac{A(\k,\kt;\sigma)}{M^4}\to \frac{4}{\pi^2}.$$
Moreover, a symmetric permutation ensures a reciprocal RIS.
\end{theorem}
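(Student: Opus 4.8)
The plan is to reduce $A(\k,\kt;\sigma)/M^4$ to an average over element pairs that can be evaluated in the limit by an equidistribution argument, exactly mirroring the mechanism behind Lemma 2, and then to obtain reciprocity as a purely algebraic corollary of the same structural facts.

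First I would pin down what ``symmetric permutation'' forces. Since $\mathbf{P}=\mathbf{P}\T$ and $\mathbf{P}\T=\mathbf{P}^{-1}$ for a permutation matrix, we get $\mathbf{P}^2=\mathbf{I}$, so $\sigma$ is an involution and the $M^2$ elements split into transposed pairs $\{a,b\}$ with $b=\sigma(a)$, together with a small number of fixed points. The single shared phase shifter applies one common phase to each pair, so the diagonal configuration obeys $c_a=c_b$, equivalently $\mathbf{C}\mathbf{P}=\mathbf{P}\mathbf{C}$. Expanding $A(\k,\kt;\sigma)=\bigl|\mathbf{s}\T(\kt)\mathbf{P}\mathbf{C}\mathbf{s}(\k)\bigr|^2=\bigl|\sum_j s_{\sigma(j)}(\kt)c_j s_j(\k)\bigr|^2$ and grouping the two indices of each pair, the pair $\{a,b\}$ contributes $c_{\{a,b\}}\bigl(s_b(\kt)s_a(\k)+s_a(\kt)s_b(\k)\bigr)$. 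Setting $c_{\{a,b\}}$ to the bisector (``in-between'') phase of the two summands---the single-shifter analogue of the beam-splitting rule \eqref{beamsplit}---rotates the pair contribution onto the positive real axis and yields the real value $2|\cos\theta_{ab}|$, where $2\theta_{ab}=\angle\bigl(s_b(\kt)s_a(\k)\bigr)-\angle\bigl(s_a(\kt)s_b(\k)\bigr)$. Each of the two elements of the pair then deviates from the common alignment direction by exactly $\pm\theta_{ab}$, which is the very same structure that produces a deviation uniform on $[-\pi/2,\pi/2]$ in Lemma 2.

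Next I would evaluate $\theta_{ab}$ from \eqref{eq2}. Writing element $a$ at grid position $(m_a,n_a)$ and $b$ at $(m_b,n_b)$, a short computation gives $2\theta_{ab}=\pi\bigl[(m_b-m_a)(\ktx-\kx)+(n_b-n_a)(\kty-\ky)\bigr]$. The key step is equidistribution: for directions whose coordinate differences $\ktx-\kx,\kty-\ky$ are irrational (all directions except a set of measure $0$), and for a random involution, which spreads the position differences $(m_b-m_a,n_b-n_a)$ generically, the phases $2\theta_{ab}\bmod 2\pi$ become equidistributed on $[0,2\pi)$ as $M\to\infty$. Hence the per-pair contributions average to $E[2|\cos\theta|]$ with $\theta$ uniform, i.e. $\tfrac{1}{\pi}\int_{0}^{\pi}2|\cos\theta|\,\mathrm{d}\theta=\tfrac{4}{\pi}$---equivalently, the normalized sum converges to $\bigl|\tfrac{1}{\pi}\int_{-\pi/2}^{\pi/2}e^{\jmath x}\mathrm{d}x\bigr|=\tfrac{2}{\pi}$, exactly the quantity from Lemma 2. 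With $M^2/2+O(M)$ pairs, and with the $O(M)$ fixed points contributing only a negligible $O(M)$ term, the real sum behaves as $\tfrac{M^2}{2}\cdot\tfrac{4}{\pi}=\tfrac{2M^2}{\pi}$, so $A(\k,\kt;\sigma)/M^4\to(2/\pi)^2=4/\pi^2$.

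Finally, reciprocity follows from the two structural facts already used. The scalar $\mathbf{s}\T(\kt)\mathbf{P}\mathbf{C}\mathbf{s}(\k)$ equals its own transpose $\mathbf{s}\T(\k)\mathbf{C}\mathbf{P}\mathbf{s}(\kt)$ (using $\mathbf{C}\T=\mathbf{C}$ and $\mathbf{P}\T=\mathbf{P}$), and since $\mathbf{C}\mathbf{P}=\mathbf{P}\mathbf{C}$ under the shared-shifter constraint, it equals $\mathbf{s}\T(\k)\mathbf{P}\mathbf{C}\mathbf{s}(\kt)$; taking magnitudes gives $A(\k,\kt;\sigma)=A(\kt,\k;\sigma)$, so the RIS is reciprocal. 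I expect the main obstacle to be the equidistribution/law-of-large-numbers step: rigorously justifying that the pair phases $2\theta_{ab}$ equidistribute uniformly and ``independently enough'' to let the empirical average converge to its mean for almost every pair of directions, while controlling both the exceptional measure-$0$ direction set (via irrationality/Weyl's criterion) and the correlations induced by a fixed realization of the random involution. The reduction to pairs and the reciprocity argument are, by contrast, routine.
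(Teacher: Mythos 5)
Your proposal is correct and follows essentially the same route as the paper's Appendix~B: group the elements into transposed pairs sharing one phase shifter, align each pair's contribution $c(1+e^{\jmath\pi\theta})$ onto a common axis so that $A$ becomes the squared sum of the magnitudes $2|\cos\theta_{ab}|$, and invoke equidistribution of the pair phases to get the $4/\pi^2$ limit. Your reciprocity argument via $\mathbf{s}\T(\kt)\mathbf{P}\mathbf{C}\mathbf{s}(\k)=\mathbf{s}\T(\k)\mathbf{P}\mathbf{C}\mathbf{s}(\kt)$ (using $\mathbf{P}=\mathbf{P}\T$, $\mathbf{C}=\mathbf{C}\T$, and $\mathbf{PC}=\mathbf{CP}$ from the shared-shifter constraint) is a clean algebraic restatement of the paper's observation that $\theta(m,n,\k,\kt)=-\theta(m,n,\kt,\k)$ leaves the optimal $c_{m,n}$ unchanged.
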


It is interesting to observe that the result in Theorem \ref{thm2} coincides with that of beam splitting. Yet, the symmetric property avoids any use of circulators which is a major simplification. Finally, we reiterate that the loss of the symmetric permutation is due to the symmetric property, but rather a consequence of the absence of circulators. If circulators are used, then a symmetric permutation suffers no loss in beamforming gain as the number of phase shifters equals the number of elements. However, this comes at the expense of reciprocity; to avoid the loss in beamforming gain, different paths must have different phase shifter settings, which implies that reciprocity is lost.


\section{Numerical results} \label{numerical}
Theorem \ref{thm1} merely states that a permutation exists such that the full beamforming gain is not achieved in any other pair of directions than the configured pair. This is, however, a grossly restrained statement, and in practice any randomly chosen permutation will suffice. More interesting questions are, e.g., (i) To what extent is the RIS spatially selective? i.e., what is the reduction in beamforming gain for non-configured directions?, (ii) How difficult is it to find a good permutation?, (iii) Does the permutation significantly alter the shape of the main lobe? (iv) What is the gain of having a reconfigurable permutation, i.e., one that is not fixed at fabrication?

A full investigation of these, and similar, questions is not possible within the scope of this paper; wherefore we aim at providing the reader with an initial understanding of how permutations impact RIS reflections. To determine a small set of permutations that can be considered "optimal/near-optimal" with respect to a set of metrics is of huge practical importance, but left for future research. Furthermore, (iv) is left for future research, as well as a structured methodology for (ii).

\subsection{Volume of main lobe}
Using~\eqref{interm} of Appendix A, the beamforming gain equals
$$A(\k+\de,\kt+\delt;\cb(\sigma,\kt,\k))=A(\de,\delt;\cb(\sigma,\mathbf{0},\mathbf{0})),$$
i.e., it is independent of $\k,\kt$. The volume of main lobe of the RIS reflection pattern may be loosely defined as the volume of the set ${\de,\delt}$ for which $A(\de,\delt;\cb(\sigma,\mathbf{0},\mathbf{0}))\approx M^4$. The shape of said set may be complicated and therefore cumbersome to treat analytically, and, further, as there are four variables involved, not easy to illustrate graphically. As a remedy, we make the constraint $(\|\de\|^2+\|\delt\|^2)^{1/2}\leq \delta$, and define
$$\beta=\min_{(\|\de\|^2+\|\delt\|^2)^{1\!/2} \,\leq\, \delta}\frac{1}{M^4}A(\de,\delt;\cb(\sigma,\mathbf{0},\mathbf{0})).$$
This corresponds to the smallest value of the beamforming gain for directions $\de,\delt$ in a 4-ball of radius $\delta$. In Figure \ref{fig:mainlobe} we plot $\beta$ against $\delta$ for 100 randomly selected permutations; we consider three different values of $M$, namely, $5,10$ and $20$. The blue and the red curves highlight the top and bottom curves, respectively. The green curves are the ensuing results for the identity permutation, i.e., a standard RIS. We can draw the following conclusions:
\begin{itemize}
    \item A system designer has freedom in the design of the main lobe: By carefully selecting the permutation, one may choose a permutation with a "widest" (top, blue curve) and "narrowest" (bottom, red curve) main lobe.
    \item As $M$ grows, there is less and less difference among permutations, and it appears that any randomly selected permutation suffices. This is most natural as average behaviors across the RIS are dominant. 
    \item In general, the shape of the main lobe is not drastically changed compared with that of the identity permutation, but it appears to widen slightly.
\end{itemize}
\begin{figure}
    \centering
    \hspace*{-4mm}\includegraphics[width=\linewidth]{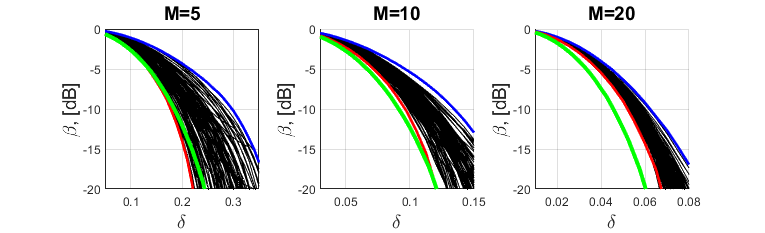}
    \caption{Behavior of the main lobe for 100 randomly selected permutations (black curves). The green curves show the results of the identity permutation. The blue and red curves show the top and bottom of the black curves.}
    \label{fig:mainlobe}
\end{figure}
The optimization involved in finding $\beta$ is not trivial, and we have used Matlab's built-in solvers initiated with $10^3$ random starting positions. 

\subsection{Degree of spatial selectivity}
The whole point of the permutation is to ensure that $A(\de,\delt;\cb(\sigma,\mathbf{0},\mathbf{0}))\ll M^4$ unless $\de,\delt$ are close to zero. So far, we have not quantified to what extent this is true, and we turn to this, most important, question next. In principle, we are interested in $\max_{\de,\delt} A(\de,\delt;\cb(\sigma,\mathbf{0},\mathbf{0}))$, and would like to find $\sigma$ such this is as small as possible. However, for $\de,\delt$ close to zero, this quantity will naturally be large, as we are within the main lobe. Wherefore, we constrain $\de,\delt$ to lie outside a 4-ball of radius $\delta$ centered at the origin.  Suitable values of $\delta$ can be understood from Figure \ref{fig:mainlobe}, and we here choose $\delta\in\{ 0.3,0.15,0.08\}$ for $M\in\{5,10,20\}.$ With that we find (through numerical optimization initiated with $10^3$ starting positions)
$$\tau=\max_{(\|\de\|^2+\|\delt\|^2)^{1\!/2} \,>\, \delta}\frac{1}{M^4}A(\de,\delt;\cb(\sigma,\mathbf{0},\mathbf{0})).$$

The results are shown as empirical CDFs in Figure \ref{fig:sidelobe} generated from the same set of 100 randomly generated permutations as in Figure \ref{fig:mainlobe}. We remark that the identity permutation, corresponding to the green curves in Figure \ref{fig:sidelobe}, leads to $\tau=1$ as such RIS is not spatially selective. 
\begin{figure}
    \centering
    \includegraphics[width=\linewidth]{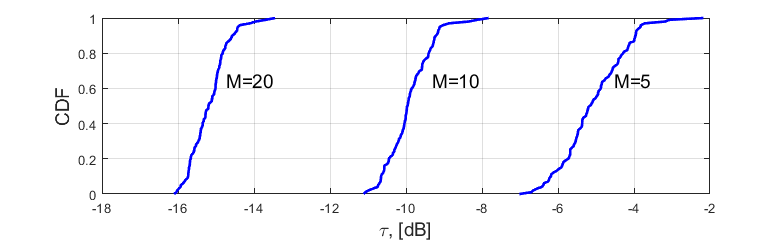}
    \caption{Empirical CDFs of $\tau$ for 100 randomly selected permutations.}
    \label{fig:sidelobe}
\end{figure}


\subsection{Some example RIS reflection patterns}
To graphically visualize $A(\de,\delt;\cb(\sigma,\mathbf{0},\mathbf{0}))$ is difficult as there is a total of four variables in $\k,\kt$. Therefore, we assume $\de=\rho_\mathrm{x}[1\;1]$ and $\delt=\rho_\mathrm{x}[1\;1]$ and illustrate $A(\de,\delt;\cb(\sigma,\mathbf{0},\mathbf{0}))/M^4$ as function of $\rho_\mathrm{x}$ and $\rho_\mathrm{y}$. We set $M=10$ and show the results in Figure \ref{numex} for two randomly selected permutations, as well as for the identity permutation.
\begin{figure}
    \centering
    \includegraphics[width=\linewidth]{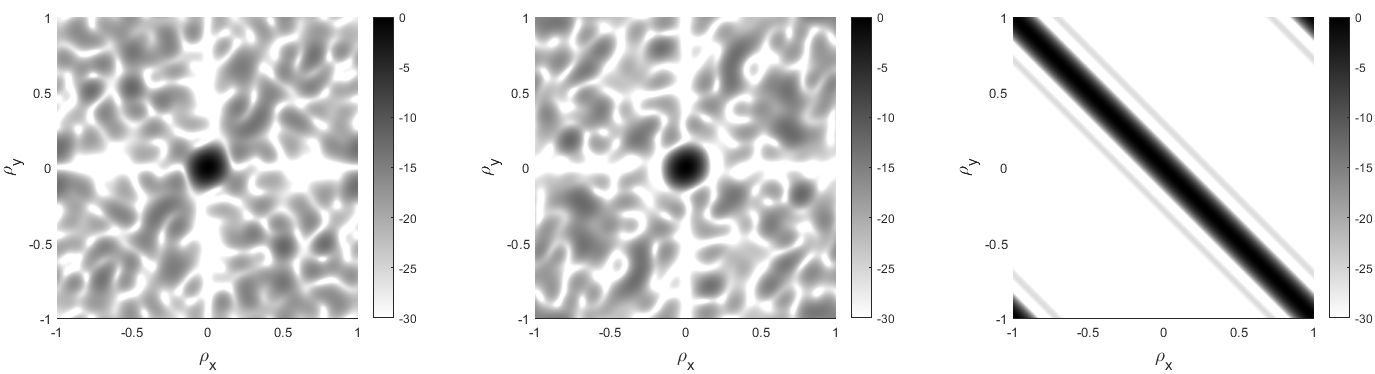}
    \caption{Example reflection patterns $A(\de,\delt;\cb(\sigma,\mathbf{0},\mathbf{0}))/M^4$ with $\de=\rho_\mathrm{x}[1\;1]$ and $\delt=\rho_\mathrm{x}[1\;1]$. Leftmost and center: Random permutations. Rightmost: Identity permutation.}
    \label{numex}
\end{figure}
That a standard RIS is spatially non-selective is seen in the rightmost plot, as no matter the impinging direction, there is an outgoing direction with full beamforming gain.

%

\section{Summary} \label{summ}
In this paper, we first observed that a standard RIS is not spatially selective, i.e., it reflects signals from almost any impinging direction. We provided an example of why this may be of concern and then put forth a RIS architecture that is spatially selective. Said architecture is based on a RIS in which the signal received at a certain antenna is re-radiated from a different antenna; this was proven to enforce spatial selectivity. We then proceeded by discussing hardware implementations and RIS reciprocity, i.e., the RIS should be capable of reflecting in both uplink and downlink without any reconfiguration. Reciprocity can be furnished for, but comes at the price of a $\approx 4\,\mathrm{dB}$ beamforming gain penalty. We also provided a numerical study illustrating the degree of spatial selectivity offered at various RIS dimensions.

\section*{Appendix A: Proof of Theorem 1}
The proof is constructive; it provides $\mathbf{\sigma} = (\sigma(m),\sigma(n))$ fulfilling Theorem 1. 
Let $\mu=A(\k+\de,\kt+\delt;\cb(\sigma,\kt,\k))$, $\mathbf{r} = (m,n)$, and let $\langle \cdot \; , \cdot \rangle$ denote the scalar product. Then, 
\begin{align} 
 \mu &= \Bigl|\sum_{m,n}\exp{\!\bigl(\jmath\pi (\langle\mathbf{r},\k \!+\! \de\rangle + \langle\mathbf{\sigma},\kt \!+\! \delt\rangle - \langle\mathbf{r},\k\rangle - \langle\mathbf{\sigma},\kt\rangle \bigr)} \Bigr|^2\nonumber \\ 
&= \Bigl |\sum_{m,n}\exp{\!\bigl(\jmath\pi (\langle\mathbf{r},\de \rangle + \langle\mathbf{\sigma}, \delt\rangle \bigr)} \Bigr |^2\nonumber \\ 
&=A(\de,\delt;\cb(\sigma,\mathbf{0},\mathbf{0})) \label{interm} \\ 
&=\mu_{\mathrm{x}}\mu_{\mathrm{x}}, \label{fin:eq0} 
\end{align}
where $\mu_\mathrm{v}=\Bigl|\sum_{m=1}^M\exp{\bigl(\jmath\pi (m\Delta_\mathrm{v}\!+\!\sigma(m)\tilde{\Delta}_\mathrm{v}\!)\bigr)}\Bigr|^2,$ $\mathrm{v}\in\{\mathrm{x},\mathrm{y} \}$.
We remark that (\ref{interm}) holds for general non-separable permutations, whereas (\ref{fin:eq0}) holds only for separable ones.

To obtain $\mu=M^4$, both $\mu_\mathrm{x}$ and $\mu_\mathrm{y}$ must equal $M^2$. We may wlog focus on $\mu_\mathrm{x}$, say, and consider sufficient conditions for  $\mu_\mathrm{x}=M^2$. Define $d_m=\sigma(m)-m$. We  have
\begin{align} \label{newproof:eq1}
\mu_\mathrm{x}&=\Bigl | \sum_{m=1}^M\exp{\bigl(\jmath\pi m(\Delta_\mathrm{x}+\tilde{\Delta}_\mathrm{x})\bigr)}\exp{\bigl(\jmath\pi d_m\tilde{\Delta}_\mathrm{x}\bigr)} \Bigr | ^2 \nonumber \\
&= \Bigl |\sum_{m=1}^M\exp{\bigl(\jmath 2\pi m f\bigr)}u_m \Bigr |^2,
\end{align}
where $f\!=\!(\Delta_\mathrm{x}\!+\!\tilde{\Delta}_\mathrm{x})/2$ and $u_m\!=\!\exp{(\jmath\pi d_m\tilde{\Delta}_\mathrm{x})}$. The sum in (\ref{newproof:eq1}) is the Discrete time Fourier transform $U(-f)$ of the sequence $u_m$. Therefore, $\mu_\mathrm{x}=M^2$ requires that $|U(f)|^2$ attain its upper bound $\sum |u_m|^2=M^2$ at some $f$. But this only happens if $u_m\!=\!\exp{(\jmath(\theta+\!m\tilde{f}\!+\!2\pi p_m))}$, with $\theta\in[0,2\pi)$ and $p_m$ an integer. 
Specifically, with integer-valued $q_m$
$$(\angle u_2-\angle u_1)+2\pi q_1 =  (\angle u_3-\angle u_2)+2\pi q_2,$$
which implies, after a straightforward manipulation, that $$\tilde{\Delta}_{\mathrm{x}}=\frac{2(q_2-q_1)}{2d_2-d_1-d_3}.$$
Let $M=3$. Then, $2d_2-d_1-d_3$ must be 0 or $\pm 3$. If $2d_2-d_1-d_3=0$, the permutation must be either the identity or the permutation $(3,2,1)$. For the latter, it can be verified that spatial selectivity does not hold; see Sec.~\ref{risnonsel}. If $2d_2-d_1-d_3=\pm 3$, we may choose  $q_2-q_1=2$ so that $\tilde{\Delta}_{\mathrm{x}}=\pm 2/3$. Setting $f=\mp 2/3$ gives $\Delta_\mathrm{x}=\pm 2$, which is congruent to 0 modulus 2, and we may take $\Delta_\mathrm{x}=0$. Repeating this for other values of $q_2-q_1$, and for $\mu_\mathrm{y}$ shows that there exist $\k+\de\neq\k$ and $\kt+\delt\neq\kt$ within the visible region such that the full beamforming gain is achieved. Thus, no $\sigma$ can yield a spatially selective RIS.

Consider, for~$M\geq 4$, the permutation $(4,3,1,2,5,\ldots,M)$. Note that~$d_1=3$, $d_2=1$, $d_3=-2$, $d_4=-2$, $d_5=\cdots=d_M=0$, and so $2d_2-d_1-d_3=1$. Hence, $\mu_\mathrm{x}=M^2$ implies $\tilde{\Delta}_\mathrm{x}=2(q_2-q_1)$. But this is an even integer, i.e., congruent to 0 modulus 2. Thus, $u_m=1$ is the only possible solution, and it follows that $f$ must be integer-valued. This, in turn, implies that $\Delta_\mathrm{x}$ is congruent to 0 modulus 2. Repeat the argument for $\mu_\mathrm{y}$ to obtain $\tilde{\Delta}_\mathrm{y}=\Delta_\mathrm{y}=0\mod{2}.$ We have constructed a permutation with $M\geq 4$ such that the RIS is always spatially selective. 

\section*{Appendix B: Proof of Theorem \ref{thm2}}
Let elements $(m,n)$ and $(m^\prime,n^\prime)$ be connected. 
The total phase change, save for the contribution of the phase shifter, for the signal received at element $(m,n)$ is
\begin{equation*}
\mathrm{exp}(\jmath \phi_{m,n}\!)=\mathrm{exp}\left(\jmath\pi(mk_\mathrm{x}+nk_\mathrm{y})\right)\mathrm{exp}(\jmath\pi(m^\prime\tilde{k}_\mathrm{x}+n^\prime\tilde{k}_\mathrm{y})),
\end{equation*}
while that of the received signal at element $(m^\prime,n^\prime)$ reads
\begin{align}
\mathrm{exp}(\jmath \phi_{m^\prime,n^\prime})&= \mathrm{exp}(\jmath\pi(m^\prime k_\mathrm{x}\!+\!n^\prime k_\mathrm{y}))\,\mathrm{exp}(\jmath\pi(m\tilde{k}_\mathrm{x}\!+\!n\tilde{k}_\mathrm{y})) \nonumber \\
&=\mathrm{exp}(\jmath\pi \theta(m,n,\k,\kt))\,\mathrm{exp}(\jmath \phi_{m,n}), \nonumber 
\end{align}
where $\theta(m,n,\k,\kt)=e_{m,n}(k_\mathrm{x}-\tilde{k}_\mathrm{x})+d_{m,n}(k_\mathrm{y}-\tilde{k}_\mathrm{y})$, with $m+e_{m,n}=m^\prime$, and $n+d_{m,n}=n^\prime$.
This gives,
\begin{align*}
A(\k,\kt;\sigma)\!=\!\max_{c_{m,n}} \Bigl|\sum_{m,n}\!c_{m,n}e^{\jmath \phi_{m,n}} \bigl(1\!+\!\mathrm{exp}(\jmath\pi \theta(m,n,\k,\kt))\bigr)\Bigr|^2\!.
\end{align*}
Solving this is trivial, and results in
$$A(\k,\kt;\sigma)=\Bigl(\sum_{m,n}\bigl|1+\mathrm{exp}(\jmath\pi \theta(m,n,\k,\kt))\bigr|\Bigr)^2.$$
For randomly selected directions, $\k, \kt$ have irrational components with probability 1. Further, as $M\to \infty$, the equidistribution theorem kicks in. With a slight adaptation, the equidistribution theorem can be, e.g., using Weyl's condition, tweaked to mod 2 arithmetic. This implies that the term $\theta(m,n,\k,\kt)$ is uniformly distributed over the interval $[0,2)$. With that,
\begin{align*} 
\lim_{M\to \infty}\frac{A(\k,\kt;\sigma)}{M^4}&=\Bigl(\int_{0}^2 \!\sqrt{\frac{1}{8}-\frac{\cos(\pi x)}{8}}\mathrm{d}x \Bigr)^2 = \frac{4}{\pi^2}.
\end{align*}

Finally, reciprocity follows by, for instance, noting that $\theta(m,n,\k,\kt)=-\theta(m,n,\kt,\k)$, hence,  the optimal solution for $c_{m,n}$ remains the same if $\k$ and $\kt$ switch roles.


\begin{thebibliography}{10}
\bibitem{RISbook}
H. Zhang, B. Di, L. Song, and Z. Han, \emph{Reconfigurable intelligent surface-empowered 6G}, 2021. Springer, Berlin/Heidelberg, Germany.

\bibitem{SmartRadio} M. Di Renzo {\sl et al.}, "Smart radio environment empowered by reconfigurable intelligent surfaces: How it works, state of research, and the road ahead," \emph{IEEE J. Sel. Areas Commun.}, vol. 38, no. 11, pp. 2450--2525, Nov. 2020.

\bibitem{industryview} R. Liu {\sl et al.}, "A path to smart radio environments: An industrial viewpoint on reconfigurable intelligent surfaces," \emph{IEEE Wireless Communications}, vol. 29, no. 1, pp. 202--208, Jan. 2022.

\bibitem{ETSI_RIS} {https://www.etsi.org/technologies/reconfigurable-intelligent-surfaces}. Accessed: October 4, 2023.

\bibitem{NCR_TR} {3GPP TR} 38.867, "Study on NR network-controlled repeaters (Release 18)," \emph{V18.0.0}, Sep. 2022.
\bibitem{NCR_WID} {3GPP} RP-223505, "Revised WID on NR network-controlled repeaters," \emph{3GPP TSG RAN Meeting \#98-e}, Dec. 2022.

\bibitem{ETSI_GR3} "Reconfigurable intelligent surfaces (RIS);
communication Models, channel models, 
channel estimation and evaluation methodology," \emph{ETSI GR RIS 003 V1.1.1}.

\bibitem{bjornson}
D. Gürgünoğlu, E. Bj\"{o}rnson, and G. Fodor,  "Impact of pilot contamination between operators with interfering reconfigurable intelligent surfaces," \emph{arXiv preprint arXiv:2306.06413}.

\bibitem{clerckx}
Q. Li  {\sl et al.}, "Reconfigurable intelligent surfaces relying on non-diagonal phase shift matrices," \emph{IEEE Trans. Veh. Technol.}, vol. 71, no. 6, pp. 6367--6383, Jun. 2022.
\end{thebibliography}
\end{document}